\pgfplotsset{compat=1.18} 
\definecolor{codegray}{rgb}{0.5,0.5,0.5}
\lstdefinestyle{mystyle}{
    commentstyle=\color{codegreen},
    keywordstyle=\color{magenta},
    numberstyle=\tiny\color{codegray},
    stringstyle=\color{codepurple},
    basicstyle=\ttfamily\footnotesize,
    breakatwhitespace=false,         
    breaklines=true,                 
    captionpos=b,                    
    keepspaces=true,                 
    numbers=left,                    
    numbersep=5pt,                  
    showspaces=false,                
    showstringspaces=false,
    showtabs=false,                  
    tabsize=2
}
  \lstdefinelanguage{GAP}{
    basicstyle=\ttfamily,
    keywords={true, false, function, return, fail, if, in, while, do, od, else, elif, fi, break, continue},
    keywordstyle=\color{blue}\bfseries,
    otherkeywords={
      >, <, ==
    },
    breaklines=true,      
    identifierstyle=\color{black},
    sensitive=True,
    comment=[l]{\#},
    commentstyle=\color{cyan},
    stringstyle=\color{red},
    morestring=[b]',
    morestring=[b]"
  }
\providecommand{\U}[1]{\protect\rule{.1in}{.1in}}
\newcolumntype{Y}{>{\raggedleft\arraybackslash}X}
\def\bc{{\mathbb{C}}}
\def\bn{{\mathbb{N}}}
\def\br{{\mathbb{R}}}
\def\bz{{\mathbb{Z}}}
\def\br{\mathbb R}
\def\wt{\widetilde}
\def\vs{\vskip.3cm}
\def\wt{\widetilde}
\def\gdeg{G\text{\rm -deg}}
\def\t2deg{\mathbb T^2\text{\rm -deg}}
\def\s1deg{S^1\text{\rm -deg}}
\def\Om{\Omega}
\DeclareMathOperator{\id}{Id}
  \definecolor{mygreen}{rgb}{0,.66,.05}
  \definecolor{lightyellow}{rgb}{1,1,.80}
\newtheorem{theorem}{Theorem}[section]
\newtheorem{proposition}{Proposition}[section]
\newtheorem{corollary}{Corollary}[section]
\newtheorem{remark}{Remark}[section]
\newtheorem{remark-definition}{Remark and Definition}[section]
\newtheorem{rem-not}{Remark and Notation}[section]
\title[A Burnside Ring Cryptosystem]{A Symmetric-Key Cryptosystem Based on the Burnside Ring of a Compact Lie Group}
\author{Ziad Ghanem}
\date{}
\begin{document}

\maketitle

\begin{abstract}
Classical linear ciphers, such as the Hill cipher, operate on fixed, finite-dimensional modules and are therefore vulnerable to straightforward known-plaintext attacks that recover the key as a fully determined linear operator. We propose a symmetric-key cryptosystem whose linear action takes place instead in the Burnside ring $A(G)$ of a compact Lie group $G$, with emphasis on the case $G=O(2)$. The secret key consists of (i) a compact Lie group $G$; (ii) a secret total ordering of the subgroup orbit-basis of $A(G)$; and (iii) a finite set $S$ of indices of irreducible $G$-representations, whose associated basic degrees define an involutory multiplier $k\in A(G)$. Messages of arbitrary finite length are encoded as finitely supported elements of $A(G)$ and encrypted via the Burnside product with $k$. For $G=O(2)$ we prove that encryption preserves plaintext support among the generators $\{(D_1),\dots,(D_L),(SO(2)),(O(2))\}$, avoiding ciphertext expansion and security leakage. We then analyze security in passive models, showing that any finite set of observations constrains the action only on a finite-rank submodule $W_L\subset A(O(2))$, and we show information-theoretic non-identifiability of the key from such data. Finally, we prove the scheme is \emph{not} IND-CPA secure, by presenting a one-query chosen-plaintext distinguisher based on dihedral probes.
\end{abstract}

\vs
\keywords{Burnside ring, cryptography, symmetric-key, representation theory, equivariant degree, compact Lie groups.}

\setlength{\epigraphwidth}{0.8\textwidth}
\epigraph{If he had anything confidential to say, he wrote it in cipher, that is, by so changing the order of the letters of the alphabet, that not a word could be made out.}{Suetonius, The Lives of the Twelve Caesars \cite{Suetonius}}

\section{Introduction}
A central goal of symmetric cryptography is to transform plaintexts (unencrypted messages) into ciphertexts (encrypted messages) that remain computationally unintelligible to an adversary, while enabling fast, reversible decoding by legitimate parties with access to a shared, secret key. Classical linear schemes like the Hill cipher operate on fixed-length blocks using an invertible matrix over $\mathbb{Z}_{26}$. As is well known, $m$ linearly independent plaintext--ciphertext pairs for block size $m$ suffice to recover the key via linear algebra. This vulnerability stems from the fact that the secret key lives as a fully determined linear operator on a finite-dimensional space. Modern cryptography has formalized strong security notions to address such vulnerabilities. Semantic security, the gold standard, requires that a ciphertext reveal no useful information about its plaintext beyond what could be inferred without observing the ciphertext. This principle is made rigorous through complexity-theoretic frameworks such as Indistinguishability under Chosen-Plaintext Attack (IND-CPA) \cite{BFO08}, which demands that an adversary with access to chosen-plaintext encryptions cannot distinguish between ciphertexts of two messages of their choice.
\vs
The looming threat of quantum computing has intensified the search for cryptographic primitives beyond traditional number-theoretic assumptions. This search for post-quantum primitives has led to schemes based on group theory and graph theory. For example, the Permutation Group Mappings (PGM) cryptosystem introduced by Magliveras and Memon leverages the combinatorial complexity of factorization in finite permutation groups rather than traditional number-theoretic assumptions \cite{Magliveras92,Magliveras89}. Similarly, graph-theoretic approaches have emerged, such as symmetric block ciphers based on the structural invariants of Minimum Spanning Trees (MST) and adjacency matrices of finite weighted graphs \cite{Murthy}. While these algebraic systems provide important alternatives, they share a foundational property with the classical linear ciphers they aim to replace: their cryptographic action takes place in a finite-dimensional platform. This creates a conceptual vulnerability, as a sufficient number of plaintext--ciphertext pairs can fully constrain the secret key operator, enabling its recovery through linear or combinatorial reconstruction.

Our work addresses this limitation by moving beyond finite-dimensional settings. We propose a symmetric-key cryptosystem whose encryption operates in the Burnside ring $A(G)$ of a compact Lie group $G$, an infinite-rank module. 
Our contributions are threefold: First, we give a protocol that encodes arbitrary finite data as finitely supported elements of $A(G)$ and encrypts by Burnside multiplication with $k$; for $G=O(2)$ we prove that encryption does not increase the maximal dihedral index present in the plaintext, avoiding ciphertext expansion and preserving support. Second, we analyze passive security (ciphertext-only and known-plaintext). Observations are confined to a finite-rank submodule $W_L\subset A(O(2))$ determined by the observed supports, while distinct key-sets $S$ can induce \emph{identical} linear operators on $W_L$; we formalize this as information-theoretic non-identifiability from any finite data. Third, we show that our deterministic linear construction cannot meet IND-CPA: we exhibit a one-query distinguishing attack via dihedral probes; this aligns with the general impossibility of deterministic encryption achieving standard indistinguishability (see Bellare–Boldyreva–O’Neill and follow-ups \cite{BBO07}). Although our scheme is not intended to compete with modern high-assurance primitives, we view it as a case study at the interface of algebraic topology, representation theory, and cryptography.

\section{The Burnside Ring}\label{sec:burnside}
\subsection{Equivariant Notation}
Let $G$ be a compact Lie group. For any subgroup  $H \leq G$, we denote by $(H)$ its conjugacy class,
by $N(H)$ its normalizer annd by $W(H):=N(H)/H$ its Weyl group in $G$. Moreover, for any pair of subgroups $(H,K) \in G \times G$, we denote by $n(H,K)$ the number of subgroups $\wt K \leq G$ with $\wt K \in (K)$ and $H \leq \tilde K$.
The set of all subgroup conjugacy classes $G$, denoted by $\Phi(G):=\{(H): H\le G\}$, admits the following natural partial ordering
\[
(H)\leq (K) \iff \exists_{ g\in G}\;\;gHg^{-1}\leq K.
\]
As is possible with any partially ordered set, we extend the natural order over $\Phi(G)$ to a total order, which we indicate by `$\prec$' to differentiate the two relations. According to our choice in the total ordering of $\Phi(G)$, the subset of subgroup conjugacy classes in $G$ admitting a zero-dimensional Weyl group
$\Phi_0 (G):= \{ (H) \in \Phi(G) : \text{$W(H)$  is finite}\}$ can be enumerated $\Phi_0 (G) = \{ (H_1), (H_2), \ldots \}$ in such a way that, if $i < j$, then $(H_i) \prec (H_j)$. 
\vs
The free $\mathbb{Z}$-module $A(G) := \mathbb{Z}[\Phi_0(G)]$ is an abelian group. Since every element $a \in A(G)$ is a \emph{finite} formal sum of the form
\[
a = \sum_{(H) \in I} n_H (H), \quad n_H \in \bz, \; I \subset \Phi_0(G), \; |I| < \infty,
\]
there exists a well-defined isomorphism $\rho: A(G) \rightarrow \bz^{(\bn)}$ associating each $a \in A(G)$ with its vector of integer coefficients, where $\bz^{(\bn)}$ is the set of infinite length integer vectors with compact support. Moreover, we can identify the coefficient standing next to the generator $(H)$ in any element $a \in A(G)$ using the notation
\[
\operatorname{coeff}^H(a) := n_H.
\]
We define the Burnside product of any pair of generators $(H),(K) \in \Phi_0(G)$ as follows
\begin{align*} 
    (H) \cdot (K) := \sum\limits_{(L) \in \Phi_0(G)} n_L(L), 
\end{align*}
where the coefficients $n_L \in \mathbb{Z}$, counting the number of orbits of type $(L)$ in the $G$-set $G/H \times G/K$, are given by the recurrence formula
\begin{align*} 
    n_L := \frac{n(L,H) |W(H)| n(L,K) |W(K)| - \sum_{(\wt L) \prec (L)} n_{\wt L} n(L,\wt L) |W(\wt L)|}{|W(L)|}.
\end{align*}
Extending this product bilinearly to all elements of $A(G)$, we obtain the \emph{Burnside Ring} of $G$.

\subsection{The Burnside Ring $A(O(2))$}
For the remainder of this paper, we focus on the case $G=O(2)$. The set $\Phi_0(O(2))$ is countably infinite, consisting of the conjugacy class of the group itself, $(O(2))$; the class of the special orthogonal group, $(SO(2))$; and the classes of the finite dihedral subgroups, $(D_k)$ for $k \ge 1$. Following \cite{AED}, we characterize the multiplicative structure of $A(O(2))$ by describing the Burnside product rules for the basis generators of $A(O(2))$ in Table \ref{table:multiplication_table}.
\begin{center} \label{table:multiplication_table}
\begin{tabular}{|c|c|c|c|}
\hline
$\cdot$ & $(O(2))$ & $(SO(2))$ & $(D_m)$ \\ \hline
$(O(2))$ & $(O(2))$ & $(SO(2))$ & $(D_m)$ \\ \hline
$(SO(2))$ & $(SO(2))$ & $2(SO(2))$ & $0$ \\ \hline
$(D_k)$ & $(D_k)$ & $0$ & $2(D_{l}), \; l := \gcd(k,m)$ \\ \hline
\end{tabular}
\end{center}
\subsection{Properties of the Basic Degrees in $A(O(2))$}
In order to define the basic degrees in $A(O(2))$, we must first describe the irreducible representations of $O(2)$. These consist of the trivial representation $\mathcal V_0 \simeq \br$ and for each $m \in \bn$ the irreducible representation $\mathcal V_m \simeq \bc$ equipped with the following $m$-folded $O(2)$-action
\[
e^{i\theta} v := e^{i m \theta} \cdot v, \; \kappa v := \overline{v}, \quad v \in \mathcal V_m,
\]
where `$\cdot$' indicates the standard complex multiplication and `$\overline{u}$' the complex conjugation of $u \in \bc$.

Leveraging the formula \eqref{eq:RF-1} and knowledge of the dimensions of the fixed point spaces $\mathcal V_k^H$ for all $k = 0,1,\ldots$ and $H \leq G$, one can easily demonstrate that the trivial irreducible $G$-representation $\mathcal V_0$ admits the basic degree $\deg_{\mathcal V_0} = O(2)$ and, for each $m \in \bn$, one has
\[
\deg_{\mathcal V_m} = (O(2)) - (D_m).
\]
Leveraging the multiplication rules summarized in Table \ref{table:multiplication_table}, let's consider the Burnside product of any two basic degrees $\deg_{\mathcal V_m}$ and $\deg_{\mathcal V_k}$ associated with non-trivial irreducible representations $\mathcal V_m$ and $\mathcal V_k$:
\begin{align*}
    \deg_{\mathcal V_m} \cdot \deg_{\mathcal V_k} &= \left( (O(2)) - (D_m) \right) \cdot \left( (O(2)) - (D_k) \right) \\
    &= O(2) - (D_m) - (D_k) - (D_m) \cdot (D_k) \\
    &= (O(2)) - (D_m) - (D_k) - 2 (D_l), \quad l := \gcd(m,k).
\end{align*}
Clearly, in the case that $m = k$, one has $\deg_{\mathcal V_m}^289 = (O(2))$. Of course, one might be interested in the Burnside product of any finite number of basic degrees. The following result generalizes the above considerations to this setting, where for the sake of simplifying our exposition, we have employed the {\it Iverson brackets} which map any logical predicate $P$ to the set $\{0,1\}$ according to the rule
\begin{align} \label{notation_iverson}
    [P] := \begin{cases}
        1 \quad & \text{ if } P \text{ is true}; \\
        0 \quad & \text{ otherwise}.
    \end{cases}
\end{align}
\begin{proposition}\label{prop:key-coeff}
    For any finite number of basic degrees $\{ \deg_{\mathcal V_{s_j}} \}_{j = 1}^N$ where $s_1,\ldots,s_N \in \bn$ are distinct, and for any number $s_0 \in \bn$, one has
\begin{align*}
\operatorname{coeff}^{D_{s_0}} \left( \prod_{j=1}^N  \deg_{\mathcal V_{k_j}} \right)  = - [s_0 \in \{ s_1,\ldots,s_N \} ] + 2 \sum\limits_{\substack{ I \subset \{s_1,\ldots,s_N\} \\ I \neq \emptyset, \{s_0\} }} (-2)^{\vert I \vert-2} 
 \left[s_0 = \operatorname{gcd}(I)  \right].
\end{align*}
\end{proposition}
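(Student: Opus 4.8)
The plan is to expand the $N$-fold product into a signed sum over subsets, then collapse each resulting product of dihedral generators to a single basic element via the multiplication table. Write $e := (O(2))$, which the table shows is the multiplicative identity of $A(O(2))$, and abbreviate $d_j := (D_{s_j})$ for $j = 1, \ldots, N$ (reading $k_j = s_j$). Since each factor is $\deg_{\mathcal V_{s_j}} = e - d_j$, distributing and using commutativity of the Burnside ring gives the finite expansion
\begin{align*}
\prod_{j=1}^N (e - d_j) = \sum_{I \subseteq \{1,\ldots,N\}} (-1)^{|I|} \prod_{j \in I} d_j ,
\end{align*}
with the empty product equal to $e$. As $e$ contributes nothing to the $(D_{s_0})$-coefficient, only the terms with $I \neq \emptyset$ matter, and because the $s_j$ are distinct I may identify each index set $I$ with its image $\{s_j : j \in I\} \subseteq \{s_1,\ldots,s_N\}$.

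Next I would establish the key computational lemma: for every nonempty $I$,
\begin{align*}
\prod_{j \in I} (D_{s_j}) = 2^{|I|-1}\,\bigl(D_{\gcd(I)}\bigr), \qquad \gcd(I) := \gcd\{s_j : j \in I\} .
\end{align*}
This follows by induction on $|I|$ from the single rule $(D_a)\cdot(D_b) = 2\,(D_{\gcd(a,b)})$ in Table~\ref{table:multiplication_table}: the base case $|I| = 1$ is immediate, and the inductive step multiplies $2^{r-1}(D_g)$ by $(D_s)$ to obtain $2^{r}(D_{\gcd(g,s)})$, invoking associativity and the identity $\gcd(\gcd(I),s) = \gcd(I\cup\{s\})$. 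Substituting into the expansion and reading off the coefficient of $(D_{s_0})$ then yields the clean intermediate formula
\begin{align*}
\operatorname{coeff}^{D_{s_0}}\!\left( \prod_{j=1}^N \deg_{\mathcal V_{s_j}} \right) = \sum_{\emptyset \neq I \subseteq \{1,\ldots,N\}} (-1)^{|I|}\, 2^{|I|-1}\, \bigl[\gcd(I) = s_0\bigr] .
\end{align*}

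Finally I would reconcile this with the stated expression, which is the step most likely to invite a careless slip. Splitting off the singletons $|I| = 1$ gives $\sum_{j} (-1)\,[s_j = s_0] = -[s_0 \in \{s_1,\ldots,s_N\}]$, where distinctness of the $s_j$ guarantees at most one surviving term; this is the first term of the proposition. For the remainder I would verify the normalization identity $2\,(-2)^{|I|-2} = (-1)^{|I|} 2^{|I|-1}$, so that the tail $\sum_{|I|\ge 2}$ matches $2\sum (-2)^{|I|-2}[\gcd(I)=s_0]$. The one subtlety is the bookkeeping around small index sets: the proposition's sum ranges over all nonempty $I \neq \{s_0\}$ and so formally includes singletons $I = \{s_j\}$ with $s_j \neq s_0$, but each vanishes since $[\gcd(\{s_j\}) = s_0] = [s_j = s_0] = 0$, leaving only $|I| \ge 2$ contributing. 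This also explains the exclusion of $\{s_0\}$: it is exactly the lone singleton on which $(-2)^{|I|-2} = (-2)^{-1}$ fails to be an integer, and its true contribution has already been absorbed into the leading $-[s_0\in\{s_1,\ldots,s_N\}]$ term, so removing it keeps every surviving summand integral while discarding nothing. Matching the two expressions term by term then completes the proof.
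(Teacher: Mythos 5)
Your proposal is correct and follows essentially the same route as the paper's own proof: expand $\prod_j\bigl((O(2))-(D_{s_j})\bigr)$ as a signed sum over subsets, collapse each monomial via $\prod_{s\in I}(D_s)=2^{|I|-1}(D_{\gcd(I)})$, and read off the $(D_{s_0})$-coefficient. You merely make explicit two things the paper leaves implicit --- the induction behind the collapsing lemma and the singleton bookkeeping that reconciles $\sum_{I\neq\emptyset}(-1)^{|I|}2^{|I|-1}[\gcd(I)=s_0]$ with the stated form $-[s_0\in\{s_1,\ldots,s_N\}]+2\sum_{I\neq\emptyset,\{s_0\}}(-2)^{|I|-2}[s_0=\gcd(I)]$ --- both of which check out.
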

\begin{proof}
Consider the Burnside ring product of the relevant basic degrees
\begin{align} 
\label{prod_1}
    \prod\limits_{k=1}^N \deg_{\mathcal V_{s_k}} &= \prod\limits_{k=1}^N  (O(2)) - (D_{s_k}),
\end{align}
which admits the expansion 
\begin{align*} 
    \prod\limits_{k=1}^N  (O(2)) - (D_{s_k}) &  = \sum\limits_{I \subset \{s_1,\ldots,s_N\}} \prod\limits_{s \in I} -(D_s) \nonumber \\
    &= \sum\limits_{I \subset \{s_1,\ldots,s_N\}}(2)^{\vert I \vert -1} (-1)^{|I|}
    (D_{\operatorname{gcd}(I)}),
\end{align*}
where the notation $I$ is used to indicate a subset of the indices $\{s_1,s_2,\ldots, s_N \}$ and the expression $\sum_{I \subset \{s_1,s_2,\ldots, s_N \}}$ describes a summation over all such subsets, including the empty set, in which case we put $\prod_{s \in \emptyset} - (D_s) := (O(2))$, and the full set. It follows that the coefficient of $(D_{s_0}) \in \Phi_0(G)$ in the Burnside product \eqref{prod_1}, which for convenience of notation we denote by
\[
n_{s_0} := \operatorname{coeff}^{D_{s_0}}\left(\prod\limits_{k=1}^N \deg_{\mathcal V_{s_k}}\right),
\]
is specified by the formula
\begin{align*} 
n_{s_0} &= \sum\limits_{\substack{ I \subset \{s_1,\ldots,s_N\} \\ I \neq \emptyset}} (2)^{\vert I \vert -1} (-1)^{|I|} [s_0 = \operatorname{gcd}(I) ] \\
&=  - [s_0 \in \{ s_1,\ldots,s_N \} ] + 2 \sum\limits_{\substack{ I \subset \{s_1,\ldots,s_N\} \\ I \neq \emptyset, \{s_0\} }} (-2)^{\vert I \vert-2} 
 \left[s_0 = \operatorname{gcd}(I)  \right].
\end{align*}
\end{proof}

\section{The Burnside Ring Cryptosystem (BRC)}

We now formalize the cryptosystem, which operates entirely within the full Burnside ring $A(G)$ of a compact Lie group $G$. The protocol relies on a shared secret key that defines an involutory transformation on the ring.

\subsection{The Shared Secret Key}
Before communication can begin, the two parties (Alice and Bob) must privately agree on a secret key. In the BRC scheme, the key is a tuple $(G, \mathcal{O}, S)$ whose components determine the algebraic environment and the specific transformation used for encryption.

\begin{enumerate}
    \item \textbf{The Group ($G$):} The choice of a compact Lie group, for example $G=O(2)$, defines the entire algebraic universe in which the cryptography operates. The group's structure dictates the basis elements of the Burnside ring $A(G)$ and, most importantly, the rules for multiplication.
    
    \item \textbf{The Basis Ordering ($\mathcal{O}$):} This is a chosen total ordering of the basis elements in $\Phi_0(G)$. This ordering acts as a secret permutation that determines how a standard vector of integers is mapped to a specific element in the Burnside ring.
    
    \item \textbf{The Representation Indices ($S$):} This is a finite set of indices, $S = \{k_1, k_2, \dots, k_m\}$, corresponding to a selection of irreducible representations of $G$. This compact set of integers is the seed for generating the key transformation.
\end{enumerate}

From this secret key, both parties deterministically construct the key element, $k \in A(G)$, which is the product of the \emph{basic degrees} associated with the representations in $S$. For our working example $G=O(2)$, this key is $k = \prod_{j \in S} \deg_{\mathcal{V}_j}$. As established in the previous section, each basic degree is an involution, and since they commute, their product $k$ is also an involution, satisfying $k^{-1} = k$.

\subsection{The Encryption and Decryption Protocol}
The protocol allows for the encryption of any data that can be represented as a finite sequence of integers. The process is illustrated in Figure \ref{fig:flow}.

\begin{figure}
    \centering
\includegraphics[width=\linewidth]{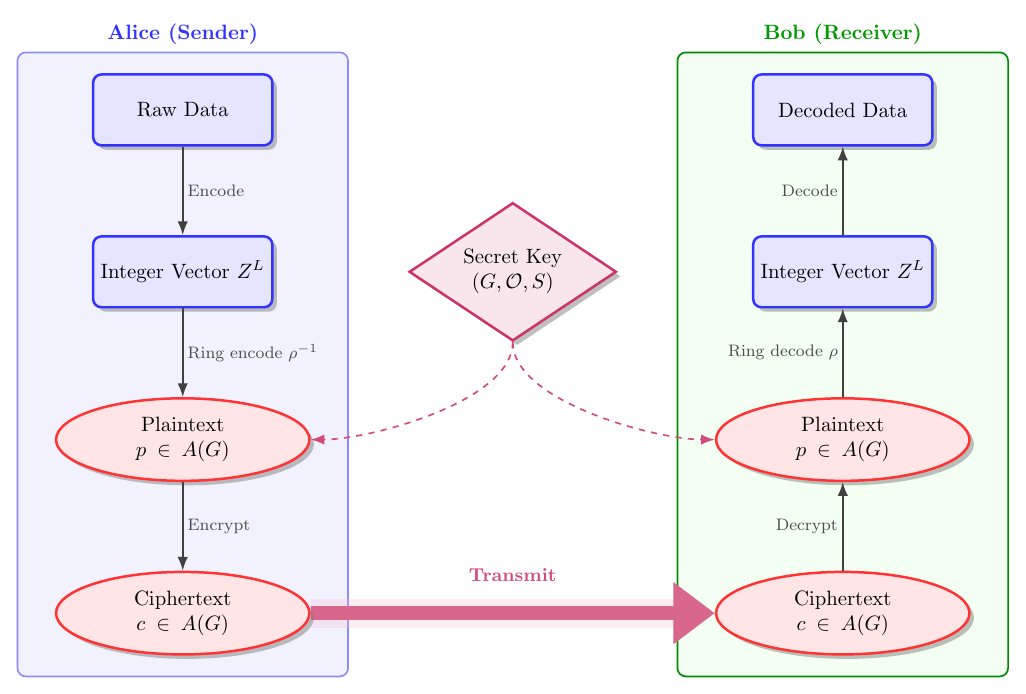}
    \caption{The flow of the Burnside Ring Cryptosystem.}
    \label{fig:flow}
\end{figure}
Let us assume Alice wishes to send a secret message to Bob.

\begin{enumerate}
    \item \textbf{Preprocessing (Alice):} Alice takes her raw data (e.g., a text string) and converts it into a vector of integers, $\vec{p} \in \bz^L$, using a public, standard encoding scheme like ASCII. The length $L$ of the vector is determined by the length of her message.

    \item \textbf{Ring Encoding (Alice):} Alice uses the secret basis ordering $\mathcal{O}$ to map her vector $\vec{p}$ to a plaintext element $p \in A(G)$. For instance, if $\mathcal{O}$ enumerates the basis as $((H_1), (H_2), \dots)$, then $p = \sum_{i=1}^L p_i (H_i)$.

    \item \textbf{Encryption (Alice):} Alice computes the ciphertext element $c = p \cdot k$ using the multiplication rules of the shared group's Burnside ring, $A(G)$.

    \item \textbf{Transmission:} Alice transmits the ciphertext. Since $c$ is an element with finite support, this involves sending the finite set of its non-zero coefficients and their corresponding basis indices.

    \item \textbf{Decryption (Bob):} Bob receives the transmitted data and reconstructs the ciphertext element $c$. He then applies the exact same transformation using his identical, secretly-held key: $p = c \cdot k$. Because the key is an involution, this operation reverses the encryption, recovering the original plaintext element.

    \item \textbf{Decoding (Bob):} Bob applies the coefficient map $\rho$ using the secret ordering $\mathcal{O}$ to map the element $p$ back to the integer vector $\vec{p}$. He then uses the public decoding scheme to retrieve Alice's original raw data.
\end{enumerate}

\subsection{The $O(2)$ Burnside Ring Cryptosystem}

For the remainder of this paper, we focus on a concrete implementation of the BRC using the group $G=O(2)$. To simplify the framework and focus on the core algebraic properties, we make the following assumptions:
\begin{itemize}
    \item The group $G=O(2)$ is fixed and public.
    \item The secret basis ordering $\mathcal{O}$ is simplified to a public, natural ordering where the $i$-th component of a message vector corresponds to the basis element $(D_i)$.
    \item The message space is restricted to the submodule spanned by the dihedral subgroups. A plaintext vector $\vec{p} = (p_1, p_2, \dots, p_L)$ is encoded as the element $p = \sum_{i=1}^L p_i (D_i) \in A(O(2))$.
\end{itemize}
\begin{figure}[h!]
    \centering
\includegraphics[width=\textwidth]{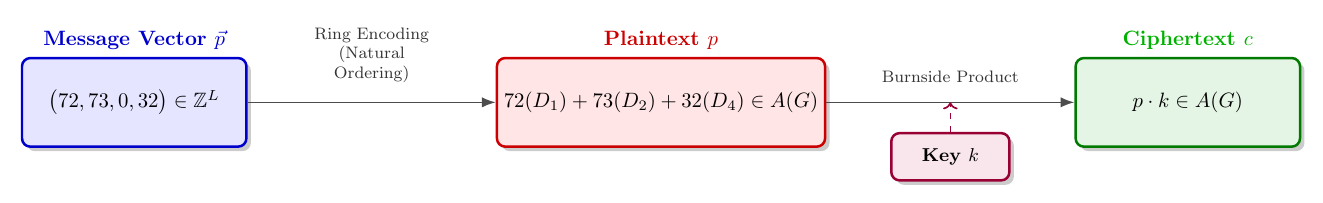}
    \caption{An illustration of the encoding and encryption process for the $O(2)$-BRC under the simplifying assumptions.}
    \label{fig:o2_protocol}
\end{figure}
Under these conditions, the only remaining secret component of the key is the set of representation indices $S$. The following proposition establishes a crucial property of this specialized system.

\begin{proposition}[Support Preservation] \label{prop:support}
Let the plaintext be an element $p = \sum_{i=1}^L a_i (D_i)$. If the key $k$ is any product of basic degrees, the ciphertext $c = p \cdot k$ is also an element of the submodule $\mathbb{Z}[\{(D_1), \dots, (D_L)\}]$. Encryption does not expand the support of the message to higher-order dihedral basis elements.
\end{proposition}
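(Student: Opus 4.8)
The plan is to reduce the claim to a single structural fact about how the dihedral generators behave under Burnside multiplication, and then to exploit the monotonicity of the $\gcd$. By bilinearity of the product, it suffices to treat a single generator of the plaintext: writing $c = p \cdot k = \sum_{i=1}^L a_i\,(D_i)\cdot k$, the conclusion $c \in \bz[\{(D_1),\dots,(D_L)\}]$ follows once I show that $(D_i)\cdot k$ lies in $\bz[\{(D_1),\dots,(D_L)\}]$ for each $i \le L$. In fact I would prove the sharper statement that $(D_i)\cdot x \in \bz[\{(D_e) : e \mid i\}]$ for \emph{every} $x \in A(O(2))$, which makes the hypothesis that $k$ be a product of basic degrees essentially irrelevant and matches the proposition's phrasing ``any product of basic degrees.''

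Next I would expand an arbitrary $x \in A(O(2))$ in the orbit basis as a finite sum $x = \alpha\,(O(2)) + \beta\,(SO(2)) + \sum_m \gamma_m\,(D_m)$ and read off $(D_i)\cdot x$ term by term from Table~\ref{table:multiplication_table}. The three relevant entries are $(D_i)\cdot(O(2)) = (D_i)$, $(D_i)\cdot(SO(2)) = 0$, and $(D_i)\cdot(D_m) = 2\,(D_{\gcd(i,m)})$. Summing, one obtains
\[
(D_i)\cdot x = \alpha\,(D_i) + 2\sum_m \gamma_m\,(D_{\gcd(i,m)}),
\]
so every basis element appearing on the right is a dihedral generator $(D_e)$ with $e = i$ or $e = \gcd(i,m)$. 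In either case $e \mid i$, and hence $e \le i \le L$. Specializing $x = k$ and summing over $i$ yields $c \in \bz[\{(D_1),\dots,(D_L)\}]$; in particular no $(SO(2))$ or $(O(2))$ component can ever be created, since neither appears among the outputs above.

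The entire content is concentrated in two observations: multiplication by the unit $(O(2))$ fixes the dihedral index, while multiplication by any other dihedral generator can only \emph{lower} it through the $\gcd$, so there is no mechanism in the dihedral column of the multiplication table to raise an index or to generate a rotational or full-group orbit type. There is accordingly no genuine obstacle; the only point needing care is the bookkeeping of the empty-intersection and coincident-index cases if one instead substitutes the explicit expansion of $k$ from Proposition~\ref{prop:key-coeff}, but the index-free formulation above sidesteps this entirely by treating $k$ as an arbitrary ring element. Conceptually, the same computation shows that $V_i := \bz[\{(D_e): e \mid i\}]$ is an ideal of $A(O(2))$ containing $(D_i)$, so $(D_i)\cdot k \in V_i \subseteq \bz[\{(D_1),\dots,(D_L)\}]$; this ideal structure is the clean explanation for why the dihedral layer of $A(O(2))$ is closed under the full ring action and why encryption cannot expand the support.
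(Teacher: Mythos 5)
Your proof is correct and takes essentially the same route as the paper's: both arguments read off the relevant entries of the multiplication table for $A(O(2))$ and conclude from $\gcd(i,m)\le i\le L$ that no dihedral index can increase. Your index-free strengthening---that $V_i=\bz[\{(D_e):e\mid i\}]$ is an ideal, so $(D_i)\cdot x\in V_i$ for \emph{arbitrary} $x\in A(O(2))$---is a modest bonus: by noting $(D_i)\cdot(SO(2))=0$ it also covers the $(SO(2))$ term, whereas the paper silently relies on the (true but unproved there) fact that a product of basic degrees has no $(SO(2))$ component.
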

\begin{proof}
The key $k$ is a linear combination of $(O(2))$ and various $(D_m)$ basis elements. We consider the product of a plaintext basis element $(D_i)$ (with $i \le L$) with the components of $k$:
\begin{itemize}
    \item $(D_i) \cdot (O(2)) = (D_i)$.
    \item $(D_i) \cdot (D_m) = 2(D_{\gcd(i,m)})$.
\end{itemize}
Since $\gcd(i,m) \le i \le L$, any resulting dihedral component in the ciphertext will have an index no larger than $L$. Thus, the support of $c$ is contained within the submodule spanned by the original plaintext's basis.
\end{proof}

\section{Security Analysis of the O(2)-BRC}
\label{sec:security}
We analyze the $O(2)$-BRC as defined above, where plaintexts are restricted to the dihedral submodule. We consider a known-plaintext attack model, assuming that the adversary has observed a finite collection of plaintext-ciphertext pairs $\{(p_j, c_j)\}_{j=1}^N$, where $c_j = p_j \cdot k$. Although the cryptosystem is linear, we will demonstrate that its operation in an infinite-dimensional space prevents key recovery by simply solving a finite system of linear equations.

\subsection{The Finite Observation Window}
Any finite set of observed plaintexts $\{p_j\}$ involves a finite number of basis elements. Let $L$ be the maximal dihedral index appearing in any of the plaintexts. By Proposition~\ref{prop:support}, the corresponding ciphertexts $\{c_j\}$ will also have supports within this range. Therefore, all observations are confined to the finite-rank $\mathbb{Z}$-module
\[
W_L := \mathbb{Z}[\{ (D_1),\dots,(D_L) \}] \subset A(O(2)).
\]
An adversary with enough linearly independent plaintexts can determine the operator $T_k(p)=p \cdot k$ restricted to $W_L$, but recovering the operator is not the same as recovering the unique key set $S$.

\subsection{Key Indistinguishability from Finite Data}
The core of the passive security of the $O(2)$-BRC lies in the fact that a countably infinite number of distinct key sets $S$ can generate operators that are identical on $W_L$.

\begin{proposition}[Key Ambiguity on $W_L$]
\label{prop:indistinguishability}
Let $S = \{s_1, \dots, s_m\}$ be a secret key set. Let $q$ be any prime number such that $q > L$. Construct a new set $S' = \{s'_1, \dots, s'_m\}$ by defining $s'_j = s_j q$. Then the key elements $k_S$ and $k_{S'}$ generate identical linear transformations on $W_L$.
\end{proposition}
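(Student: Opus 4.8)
The plan is to show that the two encryption operators agree on the free generators of $W_L$ and then invoke linearity. By Proposition~\ref{prop:support}, for each $i \le L$ both products $(D_i)\cdot k_S$ and $(D_i)\cdot k_{S'}$ lie in $W_L$, so it suffices to prove $(D_i)\cdot k_S = (D_i)\cdot k_{S'}$ for every $i \in \{1,\dots,L\}$; since $W_L$ is free on $(D_1),\dots,(D_L)$, equality on generators yields equality of the operators $T_{k_S}$ and $T_{k_{S'}}$ on all of $W_L$.

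First I would expand the key element exactly as in the proof of Proposition~\ref{prop:key-coeff}. Using $\deg_{\mathcal V_s} = (O(2)) - (D_s)$ and the product rule $\prod_{s\in I}(D_s) = 2^{|I|-1}(D_{\gcd(I)})$ for nonempty $I$, one obtains
\[
k_S = (O(2)) + \sum_{\emptyset \neq I \subseteq S} (-1)^{|I|} 2^{|I|-1} (D_{\gcd(I)}).
\]
Multiplying on the left by $(D_i)$ and applying $(D_i)\cdot(O(2)) = (D_i)$ together with $(D_i)\cdot(D_{\gcd(I)}) = 2(D_{\gcd(i,\gcd(I))})$, the coefficient of a dihedral generator $(D_{i_0})$ in $(D_i)\cdot k_S$ becomes
\[
[\,i_0 = i\,] + \sum_{\emptyset \neq I \subseteq S} (-1)^{|I|} 2^{|I|} \,[\,i_0 = \gcd(i,\gcd(I))\,].
\]

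Next I would set up the cardinality-preserving bijection $I \mapsto I' := qI$ between the nonempty subsets of $S$ and those of $S'$, noting that $\gcd(qI) = q\gcd(I)$ and $|I'| = |I|$ (multiplication by $q$ is injective, so distinctness of the indices is preserved). The crux is then a short number-theoretic observation: because $q$ is prime and $q > L \ge i$, we have $\gcd(i,q) = 1$, and coprimality of $i$ and $q$ forces $\gcd(i, q\gcd(I)) = \gcd(i,\gcd(I))$ for every $I$. Consequently the indicator $[\,i_0 = \gcd(i,\gcd(I'))\,]$ appearing in the $S'$-expansion equals $[\,i_0 = \gcd(i,\gcd(I))\,]$, so the two coefficient sums coincide term-by-term, while the constant $[\,i_0 = i\,]$ term arising from the $(O(2))$ summand is identical on both sides.

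I expect the main obstacle to be isolating and cleanly justifying this collapse $\gcd(i, q\gcd(I)) = \gcd(i,\gcd(I))$, which is exactly where the hypotheses that $q$ is prime and $q > L$ are used: together they guarantee that $q$ shares no common factor with any index $i \le L$ that can occur in the observation window, so rescaling the key set by $q$ is invisible to every admissible plaintext. Everything else is routine bookkeeping in the inclusion--exclusion expansion, and once the gcd collapse is in hand the coefficient identity---and hence equality of the operators on $W_L$---follows at once.
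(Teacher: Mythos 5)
Your proposal is correct and follows essentially the same route as the paper's own proof: reduce to the action on the generators $(D_i)$, use the subset bijection $I \mapsto qI$ with $\gcd(qI) = q\gcd(I)$, and collapse via $\gcd(i, q\gcd(I)) = \gcd(i,\gcd(I))$, which holds since $q$ is prime and $q > L \ge i$. The only difference is that you spell out the inclusion--exclusion expansion of $k_S$ and the term-by-term coefficient matching explicitly, details the paper leaves implicit.
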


\begin{proof}
The linear transformation on $W_L$ is determined by its action on the basis elements $(D_i)$ for $1 \le i \le L$. The effect of the key $k_S$ on $(D_i)$ is determined by terms involving $(D_{\gcd(i, \gcd(I))})$ for subsets $I \subseteq S$. For the key $k_{S'}$, the relevant term involves $(D_{\gcd(i, \gcd(I'))})$, where $I' = \{s q : s \in I\}$. By the properties of the greatest common divisor, $\gcd(I') = q \gcd(I)$. To show the transformations are identical, we must demonstrate that for any $i \le L$ and any $I \subseteq S$, one has
\[
\gcd(i, q \cdot \gcd(I)) = \gcd(i, \gcd(I)).
\]
Clearly, this equality holds if $\gcd(i, q) = 1$. By assumption, $q$ is a prime number and $q > L$. Since we are considering a basis element $(D_i)$ where $i \le L$, it follows that $i < q$. Because $q$ is a prime larger than $i$, it cannot be a prime factor of $i$. 
\end{proof}

\begin{corollary}[Infinite Non-Identifiability from Finite Data]
\label{cor:infinite-nonid}
For any key $S$ consistent with a finite set of observations in $W_L$, there exist infinitely many distinct key sets $S'$ that are also consistent with all observations. The key set $S$ is therefore information-theoretically non-identifiable from any finite data.
\end{corollary}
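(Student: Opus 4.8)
The plan is to obtain the corollary as an immediate consequence of Proposition~\ref{prop:indistinguishability}, by letting the auxiliary prime $q$ range over the infinitely many primes exceeding $L$. The first point to record is that, because every observed plaintext $p_j$ (and hence, by Proposition~\ref{prop:support}, every observed ciphertext $c_j$) is supported on $\{(D_1),\dots,(D_L)\}$, the predicate ``$S'$ is consistent with the observations'' depends only on the restricted operator $T_{k_{S'}}\big|_{W_L}$: a candidate key $S'$ is consistent precisely when $p_j \cdot k_{S'} = c_j$ for all $j$, and each such equation is an identity in the finite-rank module $W_L$.

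Then I would argue as follows. Fix the given key $S = \{s_1,\dots,s_m\}$, which we may assume nonempty (otherwise $k_S$ is the multiplicative identity and the claim is degenerate). For each prime $q > L$ set $S'_q := \{s_1 q, \dots, s_m q\}$. Proposition~\ref{prop:indistinguishability} yields $T_{k_{S'_q}}\big|_{W_L} = T_{k_S}\big|_{W_L}$, so by the preceding observation $S'_q$ reproduces every observed pair $(p_j, c_j)$ exactly and is therefore consistent with all the data. It remains to verify that the family $\{S'_q\}$ is genuinely infinite and consists of sets distinct from $S$. Since $q > L \ge 1$, every element $s_i q$ strictly exceeds $s_i$, so $\max S'_q = q\,\max S > \max S$ and hence $S'_q \neq S$; moreover, for two distinct primes $q < q'$ one has $\max S'_q = q\,\max S < q'\,\max S = \max S'_{q'}$, so distinct primes produce distinct sets. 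As there are infinitely many primes exceeding $L$, we obtain infinitely many pairwise distinct keys, all consistent with the observations.

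Finally, I would translate this into the information-theoretic conclusion: since the observed ciphertext sequence arises identically under each of the infinitely many keys $S, S'_{q_1}, S'_{q_2}, \dots$, no estimator—regardless of computational resources—can single out the true key from the finite data, which is precisely the asserted non-identifiability. I expect no serious obstacle here, as the entire analytic content is already supplied by Proposition~\ref{prop:indistinguishability}; the only points demanding care are the bookkeeping that the constructed sets are pairwise distinct (handled by tracking their maxima) and the trivial edge case of an empty or singleton $S$.
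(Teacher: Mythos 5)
Your proposal is correct and follows essentially the same route as the paper: the paper's proof likewise takes distinct primes $q_1, q_2, \dots > L$, forms $S'_j = \{s \cdot q_j \mid s \in S\}$, and invokes Proposition~\ref{prop:indistinguishability} to conclude all these keys induce the same operator on $W_L$. Your additions—spelling out why consistency depends only on $T_{k_{S'}}\big|_{W_L}$, verifying pairwise distinctness via maxima, and flagging the degenerate case $S = \emptyset$ (which the paper's proof silently assumes away)—are sound refinements of the same argument rather than a different approach.
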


\begin{proof}
One can construct an infinite family of such sets $S'$ by taking distinct primes $q_1, q_2, \dots > L$ and defining $S'_j = \{s \cdot q_j \mid s \in S\}$. Each $S'_j$ is distinct from $S$ and from each other, but by Proposition~\ref{prop:indistinguishability}, all generate an operator indistinguishable from $T_{k_S}$ on $W_L$.
\end{proof}

\subsection{Vulnerability to Chosen-Plaintext Attack (CPA)}
\label{subsec:CPA}
While the BRC resists key recovery in passive models (ciphertext-only and known-plaintext) due to the infinite-dimensional ambient space and the resulting non-identifiability from finite data, it is not designed to withstand \emph{chosen-plaintext} attacks. In this subsection we formalize a key-distinguishing experiment and give a deterministic, information-theoretic winning strategy for the adversary that uses only queries of the form $p=(D_x)$.

\medskip
\noindent\textbf{Key form and coefficients.}
For a finite key index set $S=\{s_1,\dots,s_N\}$ with distinct positive integers and
\[
k_S = \prod_{j=1}^N \deg_{\mathcal V_{s_j}} = (O(2)) + \sum_{n\ge 1} \alpha_S(n)\,(D_n),
\]
where, by Proposition~\ref{prop:key-coeff}, the dihedral coefficients $\alpha_S(s_0) \in \bz$ are given by
\begin{equation}\label{eq:key-coeff}
\alpha_S(s_0)
=
-\,[s_0\in S]
+
2\!\!\!\sum_{\substack{I\subset S\\ I\neq \emptyset,\ \{s_0\}}}
(-2)^{|I|-2}\,[\,s_0=\gcd(I)\,],
\qquad s_0\in\bn.
\end{equation}
\vs
\noindent\textbf{Oracle responses to dihedral probes.}
For any $x\in\bn$ and any key $S$, the Burnside product rules  imply
\begin{equation}\label{eq:oracle}
(D_x)\cdot k_S
=
(D_x) + \sum_{n\ge 1}2\alpha_S(n) \,(D_{\gcd(x,n)}).
\end{equation}
In particular, the coefficient of $(D_x)$ in the ciphertext $c_x := (D_x) \cdot k_S$ equals
\begin{equation}\label{eq:Dx-coeff}
\operatorname{coeff}^{D_x}\big((D_x)\!\cdot\! k_S\big)
=
1 + 2\sum_{\substack{n\ge 1\\ x\,\mid\, n}} \alpha_S(n).
\end{equation}
Defining the arithmetic function
\begin{align}\label{def:counting_function}
\beta_S(x)\;:=\;\sum_{\substack{n\ge 1\\ x\,\mid\, n}} \alpha_S(n)
\quad\text{for }x\in\bn,
\end{align}
then \eqref{eq:Dx-coeff} reads $\operatorname{coeff}^{D_x}((D_x)\cdot k_S)=1+2\,\beta_S(x)$.

\medskip
\noindent\textbf{CPA key-distinguishing experiment.}
The adversary is given two distinct candidate key sets $S_0\neq S_1$ and access to an oracle that encrypts under $k_{S_b}$ for a hidden $b\in\{0,1\}$:
\begin{enumerate}
  \item[\emph{(1)}] The adversary may adaptively query plaintexts $p\in A(O(2))$ (we will demonstrate that a clever adversary only needs queries of the form $p=(D_x)$).
  \item[\emph{(2)}] For each query, the oracle returns the corresponding cipher text $c=p\cdot k_{S_b}$.
  \item[\emph{(3)}] After a finite number of queries, the adversary must output a guess $\hat b$; success means that
  the probability of $\hat b=b$ is one.
\end{enumerate}

\begin{proposition}[CPA Distinguishability by Dihedral Probes]
\label{prop:CPA-distinguish}
For any two distinct key sets $S_0\neq S_1$, there exists $x\in\bn$ such that
\[
(D_x)\cdot k_{S_0} \;\neq\; (D_x)\cdot k_{S_1}.
\]
Equivalently, there exists $x$ with $\beta_{S_0}(x)\neq \beta_{S_1}(x)$. Consequently, the adversary can win the key-distinguishing CPA experiment with a single query $p=(D_x)$.
\end{proposition}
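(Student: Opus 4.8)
The plan is to reduce the statement to a clean parity fact about the key index sets and then produce an explicit distinguishing probe. By \eqref{eq:Dx-coeff}--\eqref{def:counting_function} we have $\operatorname{coeff}^{D_x}\big((D_x)\cdot k_S\big)=1+2\beta_S(x)$, so it suffices to locate an $x$ with $\beta_{S_0}(x)\neq\beta_{S_1}(x)$ --- exactly the equivalent form asserted in the statement. Hence the real task is to understand $\beta_S$ well enough to separate any two distinct key sets.

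First I would put $\beta_S$ in closed form. Rewriting the coefficient from Proposition~\ref{prop:key-coeff} as
\[
\alpha_S(s_0)=\tfrac12\sum_{\emptyset\neq I\subseteq S}(-2)^{|I|}\,[\,s_0=\gcd(I)\,],
\]
I would sum over all multiples of $x$ and use that $x\mid\gcd(I)$ holds precisely when $I\subseteq S_x:=\{s\in S:x\mid s\}$. Writing $d_S(x):=|S_x|$ and applying the binomial identity $\sum_{j=1}^{d}\binom{d}{j}(-2)^j=(-1)^d-1$ collapses the signed subset sum to
\[
\beta_S(x)=\tfrac12\big((-1)^{d_S(x)}-1\big)=
\begin{cases}0 & d_S(x)\ \text{even},\\ -1 & d_S(x)\ \text{odd}.\end{cases}
\]
Thus $\beta_S(x)$ encodes only the parity of the number of elements of $S$ divisible by $x$.

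With this in hand the separation is immediate and even explicit. Let $T:=S_0\triangle S_1$; it is nonempty since $S_0\neq S_1$, and I set $x:=\max T$. Every $s\in T$ satisfies $s\le x$, so $x\mid s$ forces $s=x$, meaning exactly one element of $T$ is divisible by $x$. Since the shared elements $S_0\cap S_1$ contribute identically to both counts, reduction modulo $2$ gives
\[
d_{S_0}(x)-d_{S_1}(x)\equiv\sum_{s\in T}[\,x\mid s\,]\equiv 1\pmod 2,
\]
so $d_{S_0}(x)$ and $d_{S_1}(x)$ have opposite parities and the closed form yields $\beta_{S_0}(x)\neq\beta_{S_1}(x)$. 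Structurally this is the assertion that the divisor-indicator functions $e_s(\cdot)=[\,\cdot\mid s\,]$ are $\mathbb F_2$-linearly independent (evaluate any finite combination at its largest index), so $S\mapsto(\beta_S(x))_{x\in\bn}$ is injective. To finish, the adversary issues the single query $p=(D_x)$ with $x=\max(S_0\triangle S_1)$ and reads $\operatorname{coeff}^{D_x}(c)=1+2\beta_{S_b}(x)\in\{1,-1\}$: this value is $1$ for the key with an even count at $x$ and $-1$ for the other, so $\hat b$ is determined with certainty.

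The main obstacle is the closed-form evaluation of $\beta_S$. A priori each $\alpha_S(n)$ is an unwieldy signed sum over subsets of $S$ coupled by gcd constraints, and there is no reason to expect the divisor sum $\beta_S(x)$ to simplify. The decisive observation is that summing over multiples of $x$ decouples the constraint $x\mid\gcd(I)$ into the membership condition $I\subseteq S_x$, after which the binomial theorem reduces everything to a single parity; once that collapse is secured, taking $x=\max(S_0\triangle S_1)$ is routine.
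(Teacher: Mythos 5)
Your proposal is correct, and it takes a sharper route than the paper's, though both hinge on the same probe. The paper argues entirely modulo~$2$: from \eqref{eq:key-coeff} it extracts only that $\alpha_S(n)\equiv[n\in S]\pmod 2$, picks $x^*=\max(S_0\triangle S_1)$, observes that every term $\alpha_{S_0}(n)-\alpha_{S_1}(n)$ with $n>x^*$, $x^*\mid n$ is even while the $n=x^*$ term is odd, and concludes $\beta_{S_0}(x^*)-\beta_{S_1}(x^*)$ is odd, hence nonzero. You instead evaluate $\beta_S$ exactly: rewriting $\alpha_S(s_0)=\tfrac12\sum_{\emptyset\neq I\subseteq S}(-2)^{|I|}[s_0=\gcd(I)]$ (which agrees with the subset expansion in the proof of Proposition~\ref{prop:key-coeff}, since $2^{|I|-1}(-1)^{|I|}=\tfrac12(-2)^{|I|}$), noting that the divisor sum decouples $x\mid\gcd(I)$ into $I\subseteq S_x$, and collapsing via $\sum_{j=1}^{d}\binom{d}{j}(-2)^j=(-1)^d-1$ to get $\beta_S(x)=\tfrac12\bigl((-1)^{d_S(x)}-1\bigr)\in\{0,-1\}$ with $d_S(x)=|\{s\in S: x\mid s\}|$; your interchange of the two finite sums and the edge cases ($d=0$, singletons) all check out. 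The endgame is then the same maximality trick as the paper's, applied to $d_S$ rather than to $\alpha_S$. Your closed form buys several things the paper's proof does not state: the oracle response at the probe is exactly $\operatorname{coeff}^{D_x}(c)=1+2\beta_{S_b}(x)\in\{1,-1\}$, so the distinguisher reads a single sign rather than merely detecting an unspecified discrepancy; the identity $\beta_S(x)\equiv d_S(x)\pmod 2$ makes transparent precisely what information dihedral probes leak (divisor-count parities and nothing more); and the $\mathbb{F}_2$-independence of the indicators $[\,\cdot\mid s\,]$ upgrades the proposition's existential claim to injectivity of $S\mapsto(\beta_S(x))_{x\in\bn}$, i.e., full key determination in principle from dihedral probes. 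The paper's version is shorter and avoids the binomial computation; yours is stronger and arguably more illuminating.
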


\begin{proof}
Since $S_0 \neq S_1$, their symmetric difference $S_0 \triangle S_1$ is non-empty. Let the adversary choose $x^* = \max(S_0 \triangle S_1)$, the largest element in this set, and assume without loss of generality that $x^* \in S_0 \setminus S_1$. By Equation~\eqref{eq:key-coeff}, the parity of the coefficient $\alpha_S(n)$ is determined entirely by the term $-[n \in S]$. Thus, $\alpha_{S_0}(x^*)$ is odd while $\alpha_{S_1}(x^*)$ is even.

Adopting the notation \eqref{def:counting_function}, consider the difference 
\[
\beta_{S_0}(x^*) - \beta_{S_1}(x^*) = (\alpha_{S_0}(x^*) - \alpha_{S_1}(x^*)) + \sum_{\substack{n>x^* \\ x^*|n}} (\alpha_{S_0}(n) - \alpha_{S_1}(n)).
\]

For any $n$ in the summation, $n > x^*$, which implies $n \notin S_0 \triangle S_1$. Therefore, for each such $n$, either $n$ is in both $S_0$ and $S_1$, or in neither. In both cases, $\alpha_{S_0}(n)$ and $\alpha_{S_1}(n)$ have the same parity, making their difference an even integer. It follows that the coefficient of $(D_{x^*})$ in the query difference $(k_{S_0} - k_{S_1}) \cdot (D_{x^*})$ is odd and, in particular, non-zero. 
\end{proof}
\vs
\noindent\textbf{A one-query distinguisher.}
Given $S_0,S_1$, the adversary computes $\beta_{S_0}(x)$ and $\beta_{S_1}(x)$ from \eqref{eq:key-coeff} and selects any $x$ with $\beta_{S_0}(x)\neq \beta_{S_1}(x)$ (existence is guaranteed by Proposition~\ref{prop:CPA-distinguish}). In particular, if the adversary selects the largest index $x^*$ at which $S_0$ and $S_1$ disagree then the single query $p=(D_{x^*})$ returns $c=(D_x)\cdot k_{S_b}$; reading $\operatorname{coeff}^{D_{x^*}}(c)=1+2\,\beta_{S_b}(x^*)$, determining $b$ with certainty.

\begin{remark}[Trivial distinguisher via $(O(2))$]
\label{rem:O2-query}
Since $(O(2))$ is the multiplicative identity in $A(O(2))$, the query $p=(O(2))$ returns $c=k_{S_b}$ directly, trivially revealing $b$ in one query. If one wishes to exclude this triviality, one may restrict admissible plaintexts to the dihedral span $\langle (D_n):n\in\bn\rangle_{\bz}$; Proposition~\ref{prop:CPA-distinguish} then shows that even under this restriction a single dihedral probe suffices to distinguish any two distinct keys.
\end{remark}

It follows that the BRC, at least with $G = O(2)$, is \emph{not} IND-CPA secure: an active adversary can always distinguish between two candidate keys with probability $1$ using at most one carefully chosen dihedral query. This limitation is inherent to deterministic linear schemes and does not affect the non-identifiability guarantees we establish for passive adversarial attack models.
\appendix
\section{The G-Equivariant Brouwer Degree}\label{sec:appendix_eqdeg}
Let $V$ be an orthogonal $G$-representation with an open bounded $G$-invariant set $\Om \subset V$. A $G$-equivariant map $f:V \rightarrow V$ is said to be $\Om$-admissible if $f(x) \neq 0$ for all $x \in \partial \Om$, in which case the pair $(f,\Om)$ is called an admissible $G$-pair in $V$. We denote by $\mathcal M^{G}(V)$ the set of all admissible $G$-pairs in $V$ and by $\mathcal{M}^{G}$ the set of all admissible $G$-pairs defined by taking a union over all orthogonal $G$-representations, i.e.
\[
\mathcal M^{G} := \bigcup\limits_V \mathcal M^{G}(V).
\]
The $G$-equivariant Brouwer degree provides an algebraic count of solutions, according to their symmetric properties, to equations of the form
\[
f(x) = 0, \; x \in \Omega,
\]
where $(f, \Omega) \in \mathcal M^{G}$. In fact, it is standard (cf. \cite{book-new}, \cite{AED}) to define the {\it $G$-equivariant Brouwer degree} as the unique map associating to every admissible $G$-pair $(f,\Om)\in \mathcal M^{G}$ an element from the Burnside ring $A(G)$, satisfying the four {\it degree axioms} of additivity, homotopy and normalization:
\vs
\begin{theorem} \rm
\label{thm:GpropDeg} There exists a unique map $\gdeg:\mathcal{M}
	^{G}\to A(G)$, that assigns to every admissible $G$-pair $(f,\Omega)$ the Burnside ring element
	\begin{equation}
		\label{eq:G-deg0}\gdeg(f,\Omega)=\sum_{(H) \in \Phi_0(G)}%
		{n_{H}(H)},
	\end{equation}
	satisfying the following properties:
    \begin{enumerate}[label=(G$_\arabic*$)]
		\item\label{g2} \textbf{(Additivity)} 
  For any two  disjoint open $G$-invariant subsets
  $\Omega_{1}$ and $\Omega_{2}$ with
		$f^{-1}(0)\cap\Omega\subset\Omega_{1}\cup\Omega_{2}$, one has
		\begin{align*}
\gdeg(f,\Omega)=\gdeg(f,\Omega_{1})+\gdeg(f,\Omega_{2}).
		\end{align*}
		\item\label{g3} \textbf{(Homotopy)} For any 
  $\Omega$-admissible $G$-homotopy, $h:[0,1]\times V\to V$, one has
		\begin{align*}
\gdeg(h_{t},\Omega)=\mathrm{constant}.	\end{align*}
		\item\label{g4}  \textbf{(Normalization)}
  For any open bounded neighborhood of the origin in an orthogonal $G$-representation $V$ with the identity operator $\id:V \rightarrow V$, one has
		\begin{align*}
	\gdeg(\id,\Omega)=(G).
		\end{align*}
	\end{enumerate}
\end{theorem}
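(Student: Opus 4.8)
The plan is to prove existence and uniqueness separately, but in both directions to funnel the $G$-equivariant problem into a finite family of \emph{classical} Brouwer degree computations, one on each fixed-point subspace $V^H$ indexed by $(H)\in\Phi_0(G)$, and to exploit the triangular structure of the orbit poset to translate between the two pictures. The common first move is a reduction: using the homotopy axiom \ref{g3} together with an equivariant Sard-type approximation, I would replace an arbitrary admissible pair $(f,\Omega)$ by a $G$-homotopic \emph{regular normal} one --- a smooth $\Omega$-admissible $G$-map whose zero set is a finite union of $G$-orbits, each of a type in $\Phi_0(G)$ and each non-degenerate on its normal slice. Because any map satisfying \ref{g3} is constant on admissible $G$-homotopy classes, it suffices to treat regular normal pairs.

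For \textbf{existence} I would build the map explicitly. For a regular normal representative and each $(H)\in\Phi_0(G)$, the restriction $f^H:=f|_{V^H}:V^H\to V^H$ is an admissible self-map of a finite-dimensional space, so its classical Brouwer degree $\deg(f^H,\Omega\cap V^H)$ is defined. The zeros visible in $V^H$ are exactly the orbits of type $(K)$ with $(H)\le (K)$, and an orbit of type $(K)$ meets $V^H$ in $n(H,K)\,|W(K)|$ points of equal local index, so the restricted degrees and the sought coefficients satisfy $\deg(f^H,\Omega\cap V^H)=\sum_{(K)\ge(H)} n(H,K)\,|W(K)|\,n_K$. Since $n(H,H)=1$, this system is triangular with respect to the partial order and inverts top-down to
\[
n_H=\frac{1}{|W(H)|}\Big(\deg(f^H,\Omega\cap V^H)-\sum_{(K)>(H)} n(H,K)\,|W(K)|\,n_K\Big),
\]
and I would set $\gdeg(f,\Omega):=\sum_{(H)} n_H\,(H)$. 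Three checks remain: integrality (the numerator equals the type-$(H)$ contribution to $\deg(f^H,\Omega\cap V^H)$, a multiple of $|W(H)|$ because $W(H)$ permutes the genuinely type-$(H)$ zeros in the top stratum of $V^H$ freely and with equal index); well-definedness (independence of the regular normal representative, which follows stratum-by-stratum from homotopy invariance of the classical degree); and the three axioms, each inherited termwise from the classical degree --- additivity \ref{g2} and homotopy \ref{g3} directly, and normalization \ref{g4} by evaluating the recurrence on $f=\id$, where $\deg(\id,\Omega\cap V^H)=1$ on every nonempty stratum forces $n_G=1$ and $n_H=0$ otherwise, i.e. $\gdeg(\id,\Omega)=(G)$.

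For \textbf{uniqueness}, suppose $\gdeg$ satisfies \ref{g2}--\ref{g4}. After the reduction to a regular normal representative, additivity \ref{g2} lets me excise disjoint invariant tubular neighborhoods of the finitely many zero-orbits, reducing $\gdeg(f,\Omega)$ to a sum of local degrees, one per orbit $G/H$. Slicing transverse to such an orbit presents $f$ as a $W(H)$-equivariant map on the normal representation; an equivariant homotopy linearizes it, and its contribution is governed by the classical Brouwer degree of the linearized slice map. Since the classical degree is itself uniquely determined by its standard additivity/homotopy/normalization characterization --- consistent with \ref{g4} on the fixed strata --- no freedom remains, and the three axioms force precisely the coefficients produced by the recurrence above.

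The main obstacle is the \textbf{equivariant transversality} step that both halves rest on: establishing that every admissible pair is $G$-homotopic, through admissible pairs, to a regular normal one whose zeros occupy only strata with finite Weyl group (hence only types in $\Phi_0(G)$). This demands an equivariant Sard--Smale argument carried out compatibly across all fixed-point subspaces at once, and it --- not the subsequent bookkeeping --- is the technical heart. The remaining, purely arithmetic difficulty is the integrality of the recurrence: verifying that division by $|W(H)|$ always returns an integer, which is exactly where the finiteness of the Weyl groups cataloguing $\Phi_0(G)$ is indispensable.
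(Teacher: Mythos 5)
Note first that the paper itself contains no proof of this theorem: it is stated in the appendix as a standard result, imported wholesale from the cited literature (\cite{book-new}, \cite{AED}), so there is no ``paper proof'' to compare against line by line. Measured against the construction actually carried out in those references, your outline reconstructs the standard route faithfully: reduction to regular normal pairs by equivariant approximation, the triangular system $\deg(f^H,\Omega^H)=\sum_{(K)\geq (H)} n_K\, n(H,K)\,|W(K)|$ inverted top-down to the recurrence formula (this is exactly property (G$_6$) of the paper), and integrality via the free $W(H)$-action on the zeros of exact type $(H)$ in $V^H$ --- where your claim of equal local indices is justified because the local Brouwer index is invariant under conjugation $f\mapsto w\circ f\circ w^{-1}$ for $w\in W(H)$, regardless of whether $w$ preserves orientation. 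You are also right that the equivariant transversality/normality step (confining zeros to orbit types in $\Phi_0(G)$, i.e.\ with finite Weyl group) is the technical heart of the existence half.

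There is, however, a genuine gap in your uniqueness half. After excising to a single non-degenerate zero orbit of type $(H)$ with slice index $s=\pm 1$, the value that must be forced is the \emph{Burnside ring element} $s\cdot(H)\in A(G)$, and your closing appeal --- ``the classical degree is itself uniquely determined, so no freedom remains'' --- does not deliver this. Uniqueness of the classical Brouwer degree pins down the integer $s$ attached to the slice map, but a hypothetical map satisfying (G$_1$)--(G$_3$) is an $A(G)$-valued invariant, and nothing you have said excludes it assigning to the twisted local model on $G\times_H B$ some other element of $A(G)$ with the same classical shadow on each fixed-point stratum; the normalization axiom only constrains the identity map on a full representation, where the answer is $(G)$, not $(H)$. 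Closing this requires a separate argument --- in the cited references it is done by explicit equivariant homotopies relating the twisted local models (the ``basic maps'') to configurations whose degree is already determined, so that additivity plus homotopy invariance force the value $s\cdot(H)$ coefficient by coefficient. Without that lemma, your uniqueness argument shows only that any axiomatic degree induces the correct classical degrees on all $V^H$, not that it equals the constructed $\gdeg$ in $A(G)$.
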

The following are additional properties of the map $\gdeg$ which can be derived from the four axiomatic properties defined above (cf. \cite{book-new}, \cite{AED}):		
\begin{enumerate}[label=(G$_\arabic*$), start=4]
\item\label{g4} \textbf{(Existence)} If  $n_{H} \neq0$ for some $(H) \in \Phi_0(G)$ in \eqref{eq:G-deg0}, then there
		exists $x\in\Omega$ such that $f(x)=0$ and $(G_{x})\geq(H)$.
		\item\label{g5}  {\textbf{(Multiplicativity)}} For any $(f_{1},\Omega
		_{1}),(f_{2},\Omega_{2})\in\mathcal{M} ^{G}$,
		\begin{align*}
			\gdeg(f_{1}\times f_{2},\Omega_{1}\times\Omega_{2})=
		\gdeg(f_{1},\Omega_{1})\cdot \gdeg(f_{2},\Omega_{2}),
		\end{align*}
		where the multiplication `$\cdot$' is taken in the Burnside ring $A(G)$.
\item\label{g6} \textbf{(Recurrence Formula)} For an admissible $G$-pair
		$(f,\Omega)$, the $G$-degree \eqref{eq:G-deg0} can be computed using the
		following Recurrence Formula:
		\begin{equation}
	\label{eq:RF-0}n_{H}=\frac{\deg(f^{H},\Omega^{H})- \sum_{(K)\succ(H)}{n_{K}\,
					n(H,K)\, \left|  W(K)\right|  }}{\left|  W(H)\right|  },
		\end{equation}
		where $\left|  X\right|  $ stands for the number of elements in the set $X$
		and $\deg(f^{H},\Omega^{H})$ is the Brouwer degree of the map $f^{H}%
		:=f|_{V^{H}}$ on the set $\Omega^{H}\subset V^{H}$.
	\end{enumerate}
    
\subsection{Basic Degrees and a Formula for the $G$-Equivariant Brouwer Degree of a $G$-equivariant Linear Isomorphism}
Let $T: V \rightarrow V$ be a $G$-equivariant linear isomorphism on an orthogonal $G$-representation $V$.
Assuming that a complete list $\{ \mathcal{V}_j \}$ of the irreducible $G$-representations is made available, $V$ admits an {\bf $G$-isotypic decomposition} of the form
\begin{align} \label{eq:Gisotypic_decomp}
        V = \bigoplus_j V_j
\end{align}
where the $j$-th $G$-isotypic component $V_j$ is modeled on the corresponding $G$-irreducible representation $\mathcal V_j$. By Schur's Lemma, $T$ preserves \eqref{eq:Gisotypic_decomp} allowing us to write
\[
T = \bigoplus_{k} T_k, \quad T_k := T|_{V_k}:V_k \rightarrow V_k.
\]
The Multiplicativity property \ref{g5} of the $G$-equivariant degree, allows us to decompose the degree of $T$ on the unit ball $B(V)$ into a Burnside ring product over these components
\begin{align}\label{eq:degree_T}
\gdeg(T,B(V)) = \prod_{k} \gdeg(T_k, B(V_k)).
\end{align} 
Since $T_k$ is $G$-equivariant for all $k$, the generalized eigenspace $E(\beta)$ for any eigenvalue $\beta \in \sigma(T_k)$ constitutes a $G$-invariant subspace such that every isotypic component $V_k$ admits a spectral decomposition $V_k = \bigoplus_{\beta \in \sigma(T_k)} E(\beta)$ which is preserved by $T_k$. Applying multiplicativity again yields
\[
\gdeg(T_k, B(V_k)) = \prod_{\beta \in \sigma(T_k)} \gdeg(T_k|_{E(\beta)}, B(E(\beta))).
\]
Notice that the degree of the restriction $T_k|_{E(\beta)}$ depends on the sign of $\beta$. Indeed, if $\beta > 0$ or is complex, $T_k|_{E(\beta)}$ is homotopic to the identity, and its degree is the multiplicative identity in $A(G)$. Thus, only negative eigenvalues contribute non-trivially. For $\beta < 0$, $T_k|_{E(\beta)}$ is homotopic to $-\mathrm{Id}$ such that its degree is $\gdeg(-\mathrm{Id}, B(E(\beta)))$. Now, each eigenspace $E(\beta)$ is a sub-representation of $V_k$ and must therefore be a direct sum of some number of copies of the irreducible representation $\mathcal{V}_k$. Let this number be $\mu_k(\beta) := \dim E(\beta) / \dim \mathcal{V}_k$. Applying the multiplicativity property a third and final time, one has
\[
\gdeg(-\mathrm{Id}, B(E(\beta))) = \gdeg(-\mathrm{Id}, B(\mathcal{V}_k))^{\mu_k(\beta)}.
\]
To formalize the above construction, we put $\mu_k := \sum_{\beta \in \sigma_-(T)} \mu_k(\beta)$ and
associate to each irreducible $G$-representation $\mathcal V_k$ the corresponding {\bf $k$-th basic degree}
\[
\deg_{\mathcal V_k} := \gdeg(-\mathrm{Id}, B(\mathcal{V}_k)),
\]
such that 
\eqref{eq:degree_T} becomes
\begin{align*}
\gdeg(T,B(V)) = \prod_{\beta \in \sigma_-(T)} \prod_{k} (\deg_{\mathcal V_k})^{\mu_k(\beta)} = \prod_{k} (\deg_{\mathcal V_k})^{\mu_k}.
\end{align*} 
Notice that the coefficients $n_{H} := \operatorname{coeff}^H(\deg_{\mathcal V_k})$ specifying each of the basic degrees $\deg_{\mathcal V_k} \in A(G)$
can be practically computed, using the recurrence formula  \eqref{eq:RF-0}, as follows
\begin{align} \label{eq:RF-1}
n_{H}=\frac{(-1)^{\dim\mathcal{V} _{k}^{H}}- \sum_{(H)\prec(K)}{n_{K}\, n(H,K)\, \left|  W(K)\right|  }}{\left|  W(H)\right|  }.
\end{align}


\begin{thebibliography}{99}
\normalsize
\baselineskip=17pt

\bibitem{AED} Z. Balanov, W. Krawcewicz, S. Rybicki, and H. Steinlein, {\it A short treatise on the equivariant degree theory and its applications}, Journal of Fixed Point Theory and Applications, Vol. 8 (2010), pp. 1-74.

\bibitem{book-new} Z. Balanov, W. Krawcewicz, D. Rachinskii, H. Wu, and J. Yu, {\it Degree Theory and Symmetric Equations Assisted by GAP System}, Mathematical Surveys and Monographs, Vol. 286, American Mathematical Society, 2024.

\bibitem{survey} Z. Balanov, W. Krawcewicz, and H. Steinlein, {\it Applied Equivariant Degree}, AIMS Series on Differential Equations \& Dynamical Systems, Vol. 1, American Institute of Mathematical Sciences, 2006.

\bibitem{BBO07} M. Bellare, A. Boldyreva, and A. O'Neill, {\it Deterministic and Efficiently Searchable Encryption}, In CRYPTO '07, Lecture Notes in Computer Science, Vol. 4622, Springer, 2007, pp. 535-552.

\bibitem{BFO08}
A. Boldyreva, S. Fehr, and A. O'Neill,
{\it On Notions of Security for Deterministic Encryption, and Efficient Constructions without Random Oracles}, Annual International Cryptology Conference. Berlin, Heidelberg: Springer Berlin Heidelberg, 2008.

\bibitem{Fausk08} H. Fausk, {\it Survey on the Burnside Ring of Compact Lie Groups}, Journal of Lie Theory, Vol. 18 (2008), pp. 351-368.

\bibitem{Magliveras89} S. Magliveras, \& N.D. Memon, \emph{Properties of Cryptosystem PGM}, Conference on the Theory and Application of Cryptology. New York, NY: Springer New York, 1989.

\bibitem{Magliveras92} S. Magliveras \& N. D. Memon, \emph{Algebraic properties of cryptosystem PGM}, Journal of Cryptology, 5(3), 167–183 (1992). https://doi.org/10.1007/bf02451113

\bibitem{Murthy}  M.P.R. Murthy \& C.H. Suneetha, \emph{The concept of Minimal Spanning Tree (MST) for Symmetric Encryption}, Journal of Electrical Systems, 20(2), 2177–2185 (2024). https://doi.org/10.52783/jes.1986


\bibitem{Suetonius} Suetonius, {\it The Lives of the Twelve Caesars}, translated by J. C. Rolfe, Loeb Classical Library, 1913.
‌


\end{thebibliography}
\end{document}